\newtheorem{theorem}{Theorem}
\newtheorem{lemma}{Lemma}
\newtheorem{definition}{Definition}
\DeclareMathOperator{\Tr}{Tr}
\DeclareMathAlphabet{\mathpzc}{OT1}{pzc}{m}{it}
\DeclareMathAlphabet{\mathcalligra}{T1}{calligra}{m}{n}
\newcommand{\cF}{\mathcal{F}}
\newcommand{\cH}{\mathcal{H}}
\newcommand{\cP}{\mathcal{P}}
\newcommand{\cS}{\mathcal{S}}
\newcommand{\cT}{\mathcal{T}}
\newcommand{\rA}{\mathrm{A}}
\newcommand{\rB}{\mathrm{B}}
\begin{document}

\title{Von Neumann's information engine without the spectral theorem}
\author{Shintaro Minagawa}
\email{minagawa.shintaro@nagoya-u.jp}
\affiliation{Graduate School of Informatics, Nagoya University, Chikusa-Ku, Nagoya 464-8601, Japan}
\author{Hayato Arai}
\email{m18003b@math.nagoya-u.ac.jp}
\affiliation{Graduate School of Mathematics, Nagoya University, Chikusa-Ku, Nagoya 464-8602, Japan}
\author{Francesco Buscemi}
\email{buscemi@nagoya-u.jp}
\affiliation{Graduate School of Informatics, Nagoya University, Chikusa-Ku, Nagoya 464-8601, Japan}

\begin{abstract}
Von Neumann obtained the formula for the entropy of a quantum state by assuming the validity of the second law of thermodynamics in a thought experiment involving semipermeable membranes and an ideal gas of quantum-labeled particles. Despite being operational in the most part, von Neumann's argument departs from an operational narrative in its use of the spectral theorem. In this work we show that the role of the spectral theorem in von Neumann's argument can be taken over by the operational assumptions of repeatability and reversibility, and using these we are able to explore the consequences of the second law also in theories that do not possess a unique spectral decomposition. As a byproduct, we obtain the Groenewold--Ozawa information gain as a natural monotone for a suitable ordering of instruments, providing it with an operational interpretation valid in quantum theory and beyond.
\end{abstract}

\maketitle

\section{Introduction}
Ever since Maxwell summoned his demon~\cite{maxwell1871theory}, the mutual influence between physics (i.e., the representation of a system's physical properties), information (i.e., the representation of an agent's knowledge about a physical system), and the measurement process (i.e., the interaction between system and agent) has emerged as one of the main themes of debate in theoretical physics. Von Neumann is surely among the most influential names to have contributed to this discussion. In his mathematical formulation of quantum theory~\cite{von1955mathematical}, much space is devoted to a careful analysis of the interplay between quantum measurement theory and thermodynamics. Therein, von Neumann approaches the problem using the artifact---common, as he explicitly remarks, in phenomenological thermodynamics and used before him also by Einstein~\cite{einstein1914beitrage} (see also Ref.~\cite{klein1967thermodynamics}) and Szilard~\cite{szilard1929uber,szilard1964decrease}---of an ideal gas of particles, whose mechanical degrees of freedom obey the laws of classical mechanics, while their states (which should really be thought of, in this context, as mere labels) are described according to quantum theory, but are otherwise irrelevant from an energetic viewpoint. In this way, it is possible to separate, on the one side, the mechanical and thermal properties of the gas, and, on the other side, the information that an agent acting on the gas has about its particles. The ``missing link'' between physics and information is provided by two assumptions: the first is the existence of particular devices called \textit{semipermeable membranes}; we will discuss them extensively in what follows. The second assumption is about the validity of the second law of thermodynamics, which is posited by von Neumann \textit{ab initio}. Following this line of thought, von Neumann was able to explore the consequences of the second law in quantum theory and obtain his famous formula for the entropy of quantum states.

The argument constructed by von Neumann, although operational in the most part (i.e., the quantum entropy is defined using a thermodynamic protocol, which in principle also provides a way to measure the quantum entropy), still relies on the structure of Hilbert spaces. In particular, a crucial role is played by the spectral decomposition of self-adjoint operators~\cite{von1955mathematical}.
A natural question is then to see how far von Neumann's discussion can be reconstructed from purely operational assumptions. The common way to approach this kind of problems utilizes the framework of general probabilistic theories (GPTs; see, e.g., Ref.~\cite{dariano-OPT,janotta2014generalized,plavala2021general,popescu1994quantum,pawlowski2009information,barnum2010entropy,masanes2011derivation,muller2012structure,barnum2014higher,wakakuwa2021gentle}). These provide a modern take on the operational approach to quantum theory, which can be traced back to works by Ludwig~\cite{ludwig1964,ludwig1967}, Davies and Lewis~\cite{davies1970operational}, Gudder~\cite{gudder1973}, and Ozawa~\cite{ozawa1980optimal}.

Without Hilbert spaces, in GPTs there exist various ways to introduce the entropy functional in an operational way, which turn out to be all equivalent in conventional quantum theory, but are not so in general \cite{short2010entropy,barnum2010entropy,kimura2010distinguishability,kimura2016entropies}.
A possible approach is to add assumptions that are strong enough to conclude that a unique spectral decomposition exists~\cite{chiribella2015operational,krumm2017thermodynamics}.
However, in more general setups, the uniqueness of the spectral decomposition is not guaranteed~\cite{barnum2014higher,barnum2015entropy,krumm2015thermodynamics}: in all such cases, von Neumann's argument seems to be a nonstarter.

In this paper, we clarify the importance of two implicit assumptions in von Neumann's argument: the existence of repeatable measurement processes, on the one hand, and of states that are the fixed point of some non-trivial repeatable measurement, on the other. These two operational assumptions---we argue---can take over the role played, in von Neumann's argument, by the spectral theorem, which instead is not operational. In this way, we can provide von Neumann's thought experiment with a fully operational narrative, and to explore the consequences of the second law of thermodynamics also in GPTs that do \textit{not} have a unique spectral decomposition. However, to achieve this, some modifications to von Neumann's argument are needed: in particular, the thermodynamic process must be modified into a cycle. As a byproduct, our argument allows us to obtain also an analogue of the Groenewold--Ozawa information gain~\cite{groenewold1971problem,ozawa1986information} in a wide range of GPTs, and to equip it with the operational meaning of monotone, with respect to a suitable preorder of the measurement processes.

\section{Basic definitions}
In order to discuss von Neumann's thought experiment in GPTs, we begin by briefly reviewing the basics of a simple single-system theory (see, e.g., Refs.~\cite{janotta2014generalized,plavala2021general} and references therein). A single-system GPT is determined by providing all possible \textit{events}, thought of as ``black boxes'' with an input and an output, which can be either the system of the theory or the trivial system (i.e., a system with only one possible state). The input-output arrangement determines the \textit{type} of the event. Events with trivial input are interpreted as preparation events, whereas events with trivial output are interpreted as observation events, or \textit{effects}. Families of events of the same type form a \textit{test}: one has, therefore, preparation tests and observation tests. The latter are usually called \textit{measurements}. Tests describe what can happen in an experiment: among the events it contains, one and only one will occur in any given repetition of the same experiment. Therefore, tests with only one element describe \textit{deterministic} events. In particular, a common assumption (corresponding to a ``no-signaling from the future'' principle~\cite{no-signaling-future,dariano_2010}) is that only one deterministic measurement exists. Instead, a theory typically provides many possible deterministic preparations: these are the (normalized) \textit{states} of the theory and form its \textit{state space}, denoted by $\Omega$. Normalized states naturally form a convex set: its extremal points are called \textit{pure states}, otherwise they are \textit{mixed}.

Tests can be composed, following the idea that any experiment can be seen as a chain preparation--process--measurement. Since states can be convexly mixed, tests are naturally assumed to satisfy linearity on such convex combinations. This observation allows for a more concrete definition of single-system tests as families $\{s_j\}_{j\in J}$ of affine maps acting on $\Omega$~\footnote{Notice that we are not assuming that \textit{any} family of affine maps constitutes a legitimate test of the theory: further conditions, most notably that of complete positivity, may restrict the set of possible tests. However, this point is irrelevant for the present discussion.}. Denoting by $u$ the unique deterministic measurement, normalization of probability requires that $\sum_j(u\circ s_j)\rho=1$, for all $\rho\in\Omega$. In conventional quantum theory, it is easy to recognize that tests are \textit{quantum instruments}~\cite{ozawa1984quantum}, whereas the composition $\{u\circ s_j\}_{j\in J}$ provides the generalization of \textit{positive operator-valued measures} (POVMs). For this reason, in what follows we will call \textit{instruments} those tests that have both input and output non-trivial, while the term \textit{measurement} will only be used to denote the analogue of POVMs.
Summarizing, in what follows we will work with normalized states, denoted by $\rho$, $\sigma$ etc., instruments, denoted by $\{s_j\}_{j\in J}$, $\{t_k\}_{k\in K}$ etc., and measurements, denoted by $\{e_j\}_{j\in J}$, $\{f_k\}_{k\in K}$ etc.

Finally, an important notion is that of perfect distinguishability: a family of normalized states $\{\rho_j\}_{j\in J}$, $\rho_j\in\Omega$ is said to be \emph{perfectly distinguishable} if there exists a measurement $\{e_j\}_{j\in J}$ such that \[e_{j}(\rho_{j'})=\delta_{jj'}\;,\] for all $j,j'\in J$.

\medskip\textit{The measurement stage.}---As anticipated in the introduction, von Neumann's thermodynamic thought experiment makes use of semipermeable membranes (SPMs). These are devices that are can separate, reversibly and without any thermodynamic cost, the particles of a gas, as long as their states are distinguishable---at least in principle, of course. Von Neumann goes to a great length to justify the use of such idealized devices, which, he argues, represent ``\textit{the thermodynamic definition of difference}''~\cite{von1955mathematical}. In what follows, we characterize SPMs from an operational viewpoint.

The first property of SPMs, implicit in their definition, is their \textit{repeatability}: the first time a particle collides with the membrane, a measurement occurs determining whether the arriving particle is a ``pass'' or a ``bounce,'' and it will remain so until the end of the experiment, even if the same particle collides with the membrane multiple times. Hence, our first assumption is that the theory contains repeatable instruments. Formally, an instrument $\{s_j\}_{j\in J}$ is repeatable if and only if \[s_{j'}\circ s_{j}=\delta_{j'j} s_j\;,\] holds, for all $j,j'\in J$.

The second requirement needed to go along with von Neumann's discussion is the existence of states that are the fixed points of some repeatable instrument.  This is necessary if we want to speak, as von Neumann does, of thermodynamic reversibility with respect to the system's initial state. Since even in quantum theory there exist repeatable instruments without invariant states~\cite{buscemi2004repeatable}, we need to treat this requirement as a further assumption. In thermodynamic terms, the assumption of state invariance makes it reasonable to assume that there are no hidden thermodynamic costs (besides the macroscopic mechanical ones) incurred when using SPMs that preserve the mixture being separated. Formally, we say that an instrument $\{s_j\}_{j\in J}$ is $\rho$-\textit{preserving} whenever $\sum_js_j(\rho)=\rho$.

In the assumption of state invariance, however, we need to exclude the trivial case of the deterministic identical instrument, which is of course repeatable and for which all states are fixed points. We thus strengthen our requirements as follows: first, we define a set of ``maximal'' instruments with respect to a suitable preorder, and then require the existence of states that are fixed points of some instrument that is maximal \textit{and} repeatable. As a straightforward extension of the post-processing preorder of POVMs~\cite{buscemi2005clean} to the case of instruments, we introduce the following preorder (see also Fig.~\ref{fig:refinement}):

\begin{figure}[htbp]
    \centering
    \includegraphics[width=7cm]{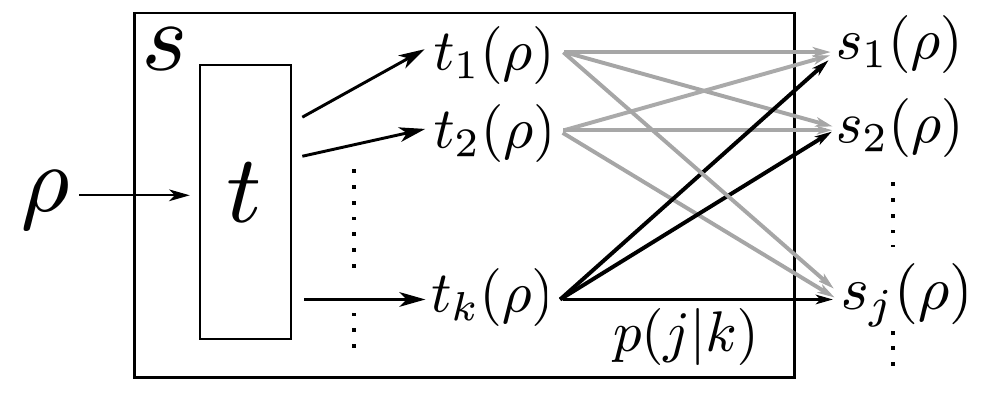}
    \caption{A schematic representation of the notion of Groenewold majorization $t\succ_\rho s$.}
    \label{fig:refinement}
\end{figure}

\begin{definition}[Groenewold majorization]
	\label{def:rho-ref}
	Given two instruments $s=\{s_j\}_{j\in J}$ and $t=\{t_k\}_{k\in K}$, and a state $\rho\in\Omega$, we say that $t$ \emph{Groenewold-majorizes} $s$ given $\rho$, in formula,
	\[t\succ_\rho s\;,\] if and only if there exists a conditional probability distribution $p(j|k)$ such that
	\begin{equation}
		s_j(\rho)=\sum_{k\in K}p(j|k) t_k(\rho)\;,
	\end{equation}
for all $j\in J$.
\end{definition}

In the above definition, we could have in fact chosen a more general scenario, including also some suitable transformations before and after the randomization. However, since for our purposes we are only interested in the maximal points of $\succ_\rho$ and since these are the same with or without the extra transformations, for the sake of simplicity, we choose to work with the above definition.

\begin{definition}[Fine-grained instruments]
	An instrument $s=\{s_j\}_{j\in J}$ is said to be \emph{fine-grained} if and only if, for all preorders $\succ_\rho$ (i.e., for all states $\rho$), the condition $t\succ_\rho s$ implies that also $s\succ_\rho t$ holds.
\end{definition}

\begin{definition}[MPP instruments]
	An instrument $s=\{s_j\}_{j\in J}$ is said to be of the \emph{measure-and-prepare-pure (MPP)} form if and only if it is completely characterized by one measurement $\{e_j\}_{j\in J}$ and one family of normalized pure states $\{\sigma_j\}_{j\in J}$, such that the $j$-th state $\sigma_j$ is prepared whenever the $j$-th effect $e_j$ occurs.
\end{definition}

The following lemma holds (for the proof, see Appendix \ref{app:lamma:f-g}) .

\begin{lemma}\label{lemma:f-g-instrument}
	Let $s=\{s_j\}_{j\in J}$ be a fine-grained instrument. Then, $s_j(\rho)$ is (up to normalization) a pure state, for all states $\rho\in\Omega$ and all $j\in J$. In other words, $s$ is an \emph{MPP} instrument. 
\end{lemma}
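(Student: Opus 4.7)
The strategy is by contrapositive: I assume some output $s_{j_0}(\rho_0)$ is not proportional to any pure state and construct a witness instrument $t$ satisfying $t \succ_{\rho_0} s$ but not $s \succ_{\rho_0} t$, contradicting fine-grainedness.

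First I set $e_j := u \circ s_j$, and for indices with $e_j(\rho_0) > 0$ I define $\sigma_j := s_j(\rho_0)/e_j(\rho_0)$; the case $e_j(\rho_0) = 0$ gives $s_j(\rho_0) = 0$, which is trivially proportional to a pure state. By hypothesis $\sigma_{j_0}$ is mixed, and by compactness of $\Omega$ together with Carath\'{e}odory's theorem each $\sigma_j$ admits a decomposition $\sigma_j = \sum_i \mu_{ji}\pi_{ji}$ into pure states $\pi_{ji} \in \Omega$ with $\mu_{ji} > 0$ and $\sum_i \mu_{ji} = 1$; the decomposition of $\sigma_{j_0}$ contains at least two distinct pure components. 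I then define the refined instrument $t = \{t_{ji}\}$ by the measure-and-prepare prescription
\[
t_{ji}(\rho) \;=\; \mu_{ji}\,e_j(\rho)\,\pi_{ji},
\]
which is affine in $\rho$ and whose $u$-marginal telescopes to $u$, so $t$ is a legitimate instrument provided the theory is closed under classical post-processing and pure-state preparation. The relation $t \succ_{\rho_0} s$ is immediate from $\sum_i t_{ji}(\rho_0) = e_j(\rho_0)\sigma_j = s_j(\rho_0)$.

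The main work is ruling out $s \succ_{\rho_0} t$. Suppose for contradiction there were a conditional distribution $q$ with $\sum_{(k,l)} q((k,l)|j') = 1$ and $t_{kl}(\rho_0) = \sum_{j'} q((k,l)|j')\,s_{j'}(\rho_0)$. For each $(k,l)$ with $e_k(\rho_0)\mu_{kl} > 0$, normalizing both sides expresses the pure state $\pi_{kl}$ as a convex combination of the normalized states $\{\sigma_{j'}\}$; extremality of $\pi_{kl}$ in $\Omega$ forces this combination to be concentrated on indices with $\sigma_{j'} = \pi_{kl}$. Applied to $j' = j_0$, whose $\sigma_{j_0}$ is mixed and hence distinct from any pure state, this gives $q((k,l)|j_0) = 0$. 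The remaining indices have $t_{kl}(\rho_0) = 0$, and positivity in the state cone together with $s_{j_0}(\rho_0) \neq 0$ again gives $q((k,l)|j_0) = 0$ there. Summing over $(k,l)$ yields $0 = \sum_{(k,l)} q((k,l)|j_0) = 1$, the desired contradiction.

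The step I expect to attract scrutiny is the last one, and the subtle design choice in the construction is to refine every outcome of $s$ simultaneously rather than only the offending one $j_0$: if one refines only $j_0$, the witness $q$ can route probability mass through the unaffected outcomes by exploiting possible linear dependencies between $s_{j_0}(\rho_0)$ and the remaining $\{s_j(\rho_0)\}_{j\neq j_0}$, and the argument stalls. Refining every outcome via pure-state decompositions makes the extremality argument apply uniformly and closes this loophole.
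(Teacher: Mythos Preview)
Your argument for the purity claim is correct and close in spirit to the paper's: both construct a refining instrument $t$ and argue $s\nsucc_{\rho_0} t$ via extremality of pure states. The difference is that you refine \emph{all} outcomes into pure components, whereas the paper refines only the single offending outcome $j_0$. Your explicit extremality-plus-normalization argument is more carefully spelled out than the paper's one-line justification (``it is not possible to make a mixed state pure by further convex mixtures''), and refining everything makes the contradiction immediate since every normalized $t_k(\rho_0)$ is pure.

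That said, your closing remark is not right: refining only $j_0$, as the paper does, also suffices. For any pure $\pi$, extremality forces every $t$-output proportional to $\pi$ (both the unrefined $s_j$ with $\sigma_j=\pi$ and the new pieces $\tilde s_{l,j_0}$ with $\sigma_l=\pi$) to receive mass only from indices $j'$ with $\sigma_{j'}=\pi$; comparing total demand $\sum_{j\in J_\pi}e_j(\rho_0)+e_{j_0}(\rho_0)\sum_{l:\sigma_l=\pi}q_l$ with total supply $\sum_{j'\in J_\pi}e_{j'}(\rho_0)$ forces $q_l=0$ for all $l$, a contradiction. So the ``routing through unaffected outcomes'' loophole does not actually open.

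More importantly, you prove only the first sentence of the lemma---that each $s_j(\rho)$ is proportional to a pure state---and not the ``in other words'' clause that $s$ is MPP, i.e., that the output pure state is \emph{independent of $\rho$}. The paper supplies this as a separate short step: writing $\rho=p\rho_1+(1-p)\rho_2$, affinity gives $s_j(\rho)=ps_j(\rho_1)+(1-p)s_j(\rho_2)$, and since the left side is proportional to a pure state both summands must be proportional to the \emph{same} pure state. You should include this.

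A minor technicality: your $t$ is only defined for $j$ with $e_j(\rho_0)>0$; to make $t$ a bona fide instrument, assign the remaining indices $t_{j}(\rho):=e_j(\rho)\pi$ for an arbitrary fixed pure $\pi$, which leaves your argument unchanged.
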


The above lemma guarantees that fine-grained instruments are all physically admissible, simply because they can be physically realized as measurements followed by the preparation of pure states that only depend on the outcome. As such, they do not require any notion of composition (i.e., complete positivity) to be discussed. We denote fine-grained instruments as pairs $\{e_j,\sigma_j\}_{j\in J}$, where $\{e_j\}$ is a measurement and $\{\sigma_j\}$ is a family of normalized pure states.

Hence, the following two assumptions, that is,
\begin{enumerate}
	\item the existence of fine-grained and repeatable instruments; and,
	\item the existence of states that are fixed points of some fine-grained and repeatable instrument,
\end{enumerate}
together with Lemma~\ref{lemma:f-g-instrument}, imply that the theory contains states that can be decomposed on a set of perfectly distinguishable pure (PDP) states. Following~\cite{barnum2014higher,krumm2017thermodynamics}, we call such states \textit{weakly spectral}. For simplicity, we summarize the previous discussion into one assumption as follows:
	
\textit{Assumption 1} (Weak spectrality). We assume that the theory contains weakly spectral states $\rho\in\Omega$, that is, states that admit a (possibly non-unique) convex decomposition into PDP states $\rho=\sum_i p_i\rho_i$.
	 
Notice that we are not assuming that \textit{all} states of the theory are weakly spectral. In what follows, for ease of notation, for a weakly spectral $\rho$, we denote by $\mathcal D(\rho)$ the set of all possible probability distributions $\{p_i\}$ that appear in at least one of its PDP decompositions.

Assumption 1 (A1) above is ``weak'' because in the literature its ``strong'' version is often encountered, and the separation between the two is strict. More precisely, instead of A1, the property of \textit{(unique or strong) spectrality} assumes that all PDP decompositions of the same state correspond to distributions $\{p_i\}$ which differ at most in a permutation of the indices \cite{barnum2015entropy,krumm2015thermodynamics}.
Strong spectrality is hence akin to assuming the spectral theorem from the onset. Instead, the existence of weakly spectral states is guaranteed as soon as there exist at least two perfectly distinguishable pure states. Examples of theories that satisfy weak spectrality but not strong spectrality are given in Ref.~\cite{krumm2015thermodynamics}. For what follows, it is convenient to introduce the following definition:
\begin{definition}[$\rho$-separating SPMs]
	A set of SPMs is said $\rho$-\emph{separating} if it corresponds to a repeatable \emph{MPP} instrument $\{e_j,\sigma_j\}_{j\in J}$, which is in particular $\rho$-preserving, that is
	\[
	\rho=\sum_{j\in J}e_j(\rho)\sigma_j\;.
	\]
\end{definition}
\noindent From the above discussion, it is clear that $\rho$-separating SPMs exist if only if $\rho$ is weakly spectral.
\begin{figure*}[t]
\centering
\includegraphics[width=\linewidth]{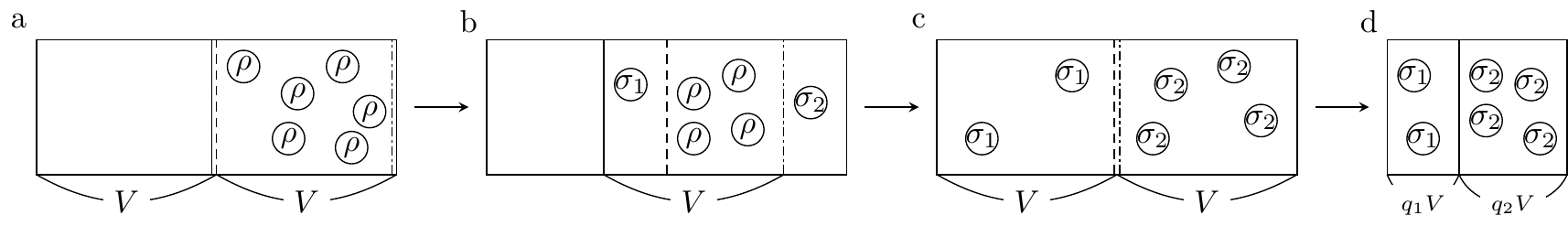}
\caption{The process of separating particles in a state $\rho=q_1\sigma_1+q_2\sigma_2$ using suitable $\rho$-separating SPMs. The dashed line is a SPM transparent to $\sigma_1$ but perfectly elastic for $\sigma_2$; the dashed-dotted line vice versa.}
\label{fig:von_neumann}
\end{figure*}

\section{The feedback stage}
Von Neumann's argument also involves a feedback control stage, during which a suitable transformation is applied to the system, depending on the measurement outcome. Again, to follow von Neumann's narrative, we need an assumption, that we identify in the following:

\textit{Assumption 2} (Free pure-state transformations~\cite{masanes2011derivation,hanggi2013violation,krumm2017thermodynamics}). For any pair of pure states, $\rho_{\textrm{in}}$ and $\rho_{\textrm{out}}$, the theory contains a deterministic event $F:\Omega\to \Omega$ which is reversible, that is, there exists another deterministic event $G$ such that $G\circ F=\operatorname{id}$, and satisfies $F(\rho_\textrm{in})=\rho_{\textrm{out}}$.

According to the thermodynamic narrative, a reversible operation is one that does not cause any change to the entropy of the thermodynamic universe. Notice that we do not put any constraints on what the operation does on states other than $\rho_{\textrm{in}}$. In this sense, Assumption 2 (A2), like A1 before, is ``weak'': instead of A2, in the literature it is common to find the assumption of \textit{strong symmetry}, which assumes that any two collections of PDP states are connected by one \textit{simultaneous} reversible process \cite{barnum2014higher,krumm2015thermodynamics,krumm2017thermodynamics}.
A counterexample of a theory that satisfies A2 but is not strongly symmetric is given in Appendix \ref{sec:app2}, where we explicitly construct a GPT that contains two pairs of PDP states that cannot be simultaneously and reversibly converted.

We conclude this section by noticing that conventional quantum theory satisfies both strong spectrality and strong symmetry. In fact, it is known that any GPT satisfying both strong spectrality and strong symmetry becomes to a large extent akin to quantum theory~\cite{barnum2020spectral}. 

\section{Entropy from thermodynamic considerations}
We are now ready to formulate our version of von Neumann's thought experiment in the language of GPTs. As already noticed, we follow von Neumann's argument, in that the particles' mechanical degrees of freedom obey the classical laws of ideal gases, whereas the non-classical degrees of freedom, i.e., the generalized states labeling the different ``isomers,'' are thought of as internal degrees of freedom with a completely degenerate Hamiltonian so that they do not directly enter in the energetic balance of the process.

We begin with the calculation of the work needed to separate the particles by means of SPMs. The separation process is depicted in Fig.~\ref{fig:von_neumann}.
An ideal, thermostatted gas contains $N$ particles in the mixture state $\rho$. According to the above discussion, $\rho$ is assumed to satisfy the property of weak spectrality. For simplicity, we assume that $\rho$ contains only two pure components, that is, $\rho=q_1\sigma_1+q_2\sigma_2$, where $\{q_j\}_{j=1,2}$ is a probability distribution and $\{\sigma_j\}_{j=1,2}$ are two PDP states. The generalization to a larger number of components is straightforward.

Closely following von Neumann, we apply a set of $\rho$-separating SPMs, corresponding to the repeatable MPP instrument $\{e_j,\sigma_j\}_{j\in\{1,2\}}$ with effects such that $e_j(\sigma_{j'})=\delta_{jj'}$. This is physically modeled by two SPMs with opposite mechanical behaviors: if one SPM is transparent for, say, $\sigma_1$, the other is transparent for $\sigma_2$. After the separation, which is done isothermally, the two species ($\sigma_1$ and $\sigma_2$) are contained in two separate chambers, both of the same size as the initial chamber.
The number of particles in a state $\sigma_j$ is $e_j(\rho)N=q_jN$.
In agreement with our preceding discussion and previous analyses~\cite{von1955mathematical,hanggi2013violation,krumm2017thermodynamics}, we can assume that this first step of the separation is basically a solid translation of coordinates, so that the work worth of this step is zero.

The two SPMs are then replaced by impermeable walls and we isothermally compress the chamber with $\sigma_1$ (resp., $\sigma_2$) until its volume becomes $q_1V$ (resp., $q_2V$), so that after the compression both chambers have the same (initial) pressure.
From the ideal gas law, the amount of work needed for the isothermal compression c$\to$d is
\begin{align}
&-\int_{V}^{q_1V}\frac{q_1NkT}{V'}dV'-\int_{V}^{q_2V}\frac{q_2NkT}{V'}dV'\nonumber\\
	&=H(\{q_j\})NkT\ln 2\;,\label{eq:work-in}
\end{align}
where $T$ is the temperature of the environment, $k$ is the Boltzmann constant, and $H(\{q_j\}):=-\sum_j q_j\ln q_j$. This ends the separation protocol. If the decomposition of $\rho$ has more than two PDP states, we repeat this protocol for each perfectly distinguishable state and obtain the same result.

Next, we consider the mixing process. Here we deviate from von Neumann's process, in that we want to go back to the initial mixed state, whereas von Neumann's final state is pure. After the separation process has been completed, we are in the situation in which the two (pure, distinguishable) species $\sigma_1$ and $\sigma_2$ are in two separate chambers with equal pressure (i.e., the initial pressure). The total volume of the two chambers together equals the initial volume, but now we know which species is present in each chamber.
If we were to replace the wall between the two chambers again with the same SPMs used during the separation step, by letting the two species slowly and independently expand, we would be able to gain back the work invested in the separation stage with the gas restored to its initial state. We would then have achieved a trivial, reversible cycle.

\begin{figure}[htbp]
	\centering
	\includegraphics[width=8cm]{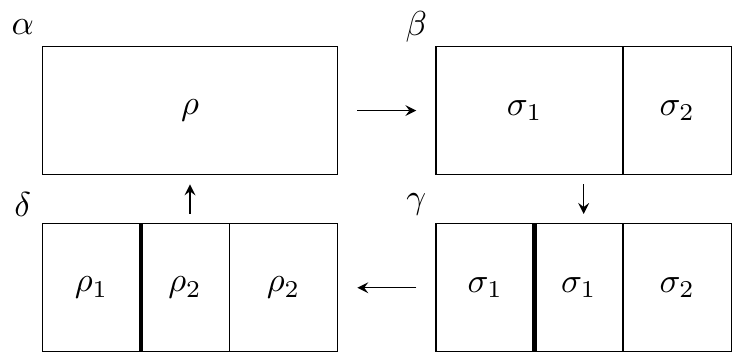}
	\caption{The cycle used in the proof of Theorem \ref{theorem:main}. \textbf{[$\alpha\to\beta$]}: work is invested to achieve the separation as in Fig.~\ref{fig:von_neumann}. \textbf{[$\beta\to\gamma$]}: additional partition plates are inserted at proper positions (zero work). \textbf{[$\gamma\to\delta$]}: to each chamber, a suitable reversible transformation, with zero work-cost, is applied and the states therein contained are transformed into $\rho_i$. \textbf{[$\delta\to\alpha$]}: work is extracted, by running the separation process for the decomposition $\rho=\sum_ip_i\rho_i$ backwards.}
	\label{figure:main}
\end{figure}

Instead, we consider a \textit{different} decomposition of $\rho$ in pure distinguishable states, as assumption A1 does not exclude such a possibility. Let us denote the alternative decomposition of $\rho$ by $\sum_ip_i\rho_i$. As depicted in Fig.~\ref{figure:main}, we can freely insert additional partitions in a suitable way, and transform the pure states (which are known) in each partition into $\rho_1$ or $\rho_2$, as needed. In this way (by removing additional walls as necessary), we have transformed the arrangement corresponding to the decomposition $\sum_jq_j\sigma_j$ into the arrangement corresponding to $\sum_ip_i\rho_i$ without the need to account for any new term in the thermodynamic balance, thanks to assumption A2.

Then, by using two new SPMs tuned to match the new PDP decomposition $\sum_ip_i\rho_i$, and by following the separation steps backward, we can bring the system back to its original state having gained in the process (isothermal expansion) an amount of work equal to:
\begin{equation}
    H(\{p_i\})NkT\ln2 .\label{eq:W_mix}
\end{equation}
As a whole, therefore, the amount of work we can extract from this cycle is calculated from \eqref{eq:work-in} and \eqref{eq:W_mix} as follows:
\begin{align}
    \Delta W=[H(\{p_i\})-H(\{q_j\})]\;NkT\ln 2\;.\label{eq:delta_W}
\end{align}

We now invoke, following von Neumann, the second law of thermodynamics, which implies that $\Delta W$ cannot be strictly positive; otherwise, we would have constructed a \textit{perpetuum mobile} of the second kind. Moreover, since the same must hold also if we exchange the role of the two decompositions, we conclude that $\Delta W$ must be exactly zero. But this must hold for \textit{any} PDP decomposition of $\rho$. Hence we obtain the following:
\begin{theorem}\label{theorem:main}
    Under assumptions A1 and A2, the second law of thermodynamics implies that for all weakly spectral $\rho\in\Omega$ we have
    \begin{equation}
        H(\{p_i\})=H(\{q_i\})\;,\label{eq:ent}
    \end{equation}
	for all $\{p_i\},\{q_i\}\in\mathcal{D}(\rho)$. In other words, a necessary condition for the validity of the second law is that any weakly spectral state has a unique spectral entropy.
\end{theorem}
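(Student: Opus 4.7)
The plan is to build the thermodynamic cycle sketched in Fig.~\ref{figure:main} and, using the second law, deduce that the two Shannon entropies associated with any two PDP decompositions of the same weakly spectral state $\rho$ must coincide.

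First I fix two decompositions $\rho=\sum_j q_j\sigma_j$ and $\rho=\sum_i p_i\rho_i$ guaranteed by A1, and analyse the four legs of the cycle. In step $\alpha\to\beta$ I use a set of $\rho$-separating SPMs associated with $\{q_j,\sigma_j\}$. Because the underlying instrument is repeatable, MPP (Lemma~\ref{lemma:f-g-instrument}), and $\rho$-preserving, no hidden thermodynamic cost is incurred in the initial solid translation, and the subsequent isothermal compression of each chamber down to volume $q_jV$ requires exactly the work computed in Eq.~\eqref{eq:work-in}, namely $H(\{q_j\})NkT\ln 2$. Step $\beta\to\gamma$ inserts partitions into sub-chambers whose contents are pure and known, so it is free. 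For step $\gamma\to\delta$, I introduce a coupling of the two distributions---partitioning each $q_j$-chamber according to any joint distribution $r_{ij}$ with marginals $\{q_j\}$ and $\{p_i\}$---and apply A2 within each sub-chamber to transform $\sigma_j$ into the target pure state $\rho_i$ via a reversible deterministic map with zero work-cost. Removing the internal walls then produces the configuration corresponding to $\sum_i p_i\rho_i$. Finally, step $\delta\to\alpha$ is the separation protocol of the second decomposition run backwards, which by the same accounting returns the gas to $\rho$ while extracting work $H(\{p_i\})NkT\ln 2$.

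Summing the four contributions reproduces the net extractable work of Eq.~\eqref{eq:delta_W}. The second law forbids $\Delta W>0$ (else a \emph{perpetuum mobile} of the second kind would exist), so $H(\{p_i\})\le H(\{q_j\})$; repeating the same construction with the two decompositions interchanged yields the opposite inequality, and equality follows. Since $\{p_i\}$ and $\{q_j\}$ were arbitrary elements of $\mathcal{D}(\rho)$, the theorem is proved.

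The principal obstacle, in my view, is not the algebra but justifying that steps $\beta\to\gamma$ and especially $\gamma\to\delta$ carry no thermodynamic cost in a generic GPT. For the latter, A2 must be read as \emph{reversible in the thermodynamic sense}: the deterministic map converting one pure state into another generates no heat in the reservoir. This is the precise point where A2 takes over the job that in von Neumann's original derivation was played by the structural availability of unitary rotations on Hilbert space; crucially, A2 need only be applied \emph{locally}---each sub-chamber contains particles in a single known pure state after step $\beta$---so the weak form of reversibility embodied in A2 is sufficient and strong symmetry is not required.
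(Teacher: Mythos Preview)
Your proposal is correct and follows essentially the same approach as the paper: the same four-leg cycle of Fig.~\ref{figure:main}, the same work accounting leading to Eq.~\eqref{eq:delta_W}, and the same symmetry argument (swapping the two decompositions) to upgrade $\Delta W\le 0$ into equality. Your explicit mention of a coupling $r_{ij}$ with marginals $\{q_j\}$ and $\{p_i\}$ is a welcome clarification of what the paper leaves implicit in the phrase ``insert additional partitions in a suitable way,'' but it does not constitute a different route.
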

Indeed, we provide an explicit example of a GPT where the spectral entropy is not uniquely defined in Appendix \ref	{sec:app-different-spectral-entropies}.

Here we derive concavity of the spectral entropy under the assumption that \textit{all} the states of the theory are weakly spectral and that the second law is valid.


\begin{theorem}[Concavity]\label{theorem:concavity}
Let $\rho_1,\rho_2\in\Omega$ be states. Then we have the following inequality:
\begin{equation}
    H\left(p\rho_1+(1-p)\rho_2\right)\ge pH(\rho_1)+(1-p)H(\rho_2)\;,
\end{equation}
for any value $0\le p\le 1$.
\end{theorem}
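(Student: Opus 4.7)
The plan is to run a thermodynamic cycle analogous to that of Theorem~\ref{theorem:main} but now involving the mixture $\rho := p\rho_1 + (1-p)\rho_2$ together with its components $\rho_1$ and $\rho_2$. The second law applied to the net work invested around the cycle will produce the desired concavity inequality.

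Starting from $N$ particles of $\rho$ in a vessel of volume $V$ at temperature $T$, I would (i) insert an impermeable wall splitting the vessel into sub-chambers of volumes $pV$ and $(1-p)V$, each still filled with $\rho$ at the original density (zero work, since the gas is uniform); (ii) reversibly transform the $\rho$ in the left (resp.\ right) sub-chamber into $\rho_1$ (resp.\ $\rho_2$); (iii) remove the internal wall, regaining $p\rho_1 + (1-p)\rho_2 = \rho$ in volume $V$ at zero mechanical cost. Since the particle number density is preserved throughout, the mechanical-entropy balance cancels, and the total invested work reduces to
\begin{equation*}
\Delta W = NkT\ln 2\,[H(\rho) - pH(\rho_1) - (1-p)H(\rho_2)].
\end{equation*}
The second law forbids $\Delta W < 0$ (else one could run the cycle backward as a perpetuum mobile of the second kind), yielding the concavity inequality.

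The main technical point is justifying the subroutine in (ii): the reversible transformation of $\rho$ into $\rho_1$ inside a fixed chamber of volume $pV$ containing $pN$ particles. Here the hypothesis that \emph{all} states are weakly spectral becomes essential, as it provides PDP decompositions $\rho = \sum_k r_k \tau_k$ and $\rho_1 = \sum_j q_j^{(1)} \tau_j^{(1)}$. I would implement the subroutine in three stages: first, apply the $\rho$-separation protocol of Eq.~\eqref{eq:work-in}, investing $pNkT\ln 2\,H(\rho)$ of work and leaving the gas as pure components in sub-sub-chambers; second, use A2 to convert all these pure components into a single common reference pure state, remove the internal partitions (free, because the gas is now uniform), reinsert partitions so as to match the sub-sub-chamber sizes required by the PDP decomposition of $\rho_1$, and use A2 again to install the pure components $\tau_j^{(1)}$, a rearrangement that costs neither work nor entropy of mixing; third, run the $\rho_1$-separation protocol in reverse, extracting $pNkT\ln 2\,H(\rho_1)$ of work. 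The net cost of the subroutine is therefore $pNkT\ln 2\,[H(\rho)-H(\rho_1)]$, and summing with the analogous right-chamber contribution reproduces the expression for $\Delta W$ above. Theorem~\ref{theorem:main} ensures that the values $H(\rho), H(\rho_1), H(\rho_2)$ are unambiguously defined regardless of the particular PDP decompositions chosen, so the thermodynamic balance of the cycle is consistent and the conclusion follows.
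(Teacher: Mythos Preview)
Your proof is correct and follows essentially the same strategy as the paper's: build an isothermal cycle using the separation protocol to account for $H(\rho)$, rearrange the pure components via A2, run the reverse separation to account for $pH(\rho_1)+(1-p)H(\rho_2)$, close the cycle by removing the wall between $\rho_1$ and $\rho_2$, and invoke the second law on the net work. The only difference is cosmetic ordering---you insert the partition first and process each sub-chamber independently (separating $\rho$ twice), while the paper separates the whole vessel into pure components once and then repartitions---but the work balance and conclusion are identical; one small phrasing slip is that if $\Delta W<0$ the \emph{forward} cycle is already the perpetuum mobile, no reversal needed.
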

\begin{proof}
Consider a state $\rho$ whose decomposition into PDP states is $\rho=\sum_i q_i \sigma_i$. First, as shown in Fig.~\ref{fig:concavity}, we separate it into its pure components using a suitable set of $\rho$-separating SPMs. The resulting arrangement is shown as (b) in Fig.~\ref{fig:concavity} and costs an amount of work proportional to $H(\rho)$. Then, by adding partitions, transforming pure states, and mixing, we can arrive at the arrangement shown in (c) of Fig.~\ref{fig:concavity}. In this step we can gain an amount of work proportional to $\sum_j p_j H(\rho_j)$, where $p_1\rho_1+p_2\rho_2=\rho$.
Finally, since $\rho$ is the convex combination of $\rho_1$ and $\rho_2$, we can accomplish the process (c)$\to$(a) by only removing the partion, without requiring any work. Therefore, the work we can extract from the isothermal cycle (a)$\to$(b)$\to$(c)$\to$(a) is $\Delta W=\left[-H(\rho)+\sum_j p_jH(\rho_j)\right]NkT\ln 2$. The second law implies that $\Delta W\le 0$, which completes the proof.

\begin{figure}[htbp]
	\centering
	\includegraphics[width=8.5cm]{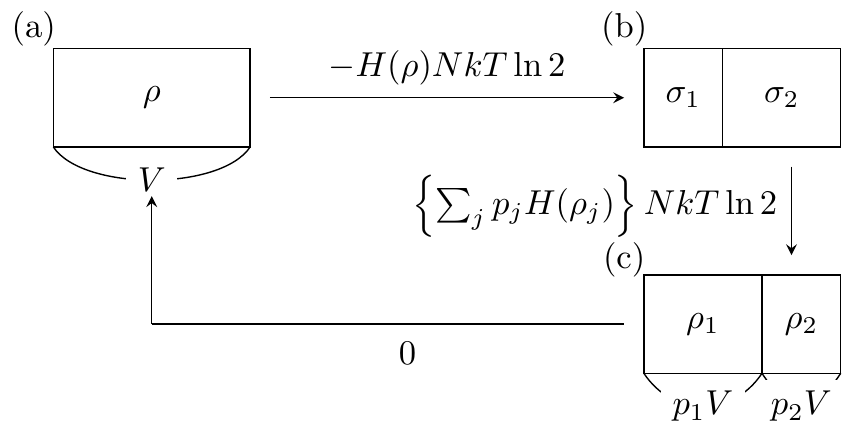}
	\caption{The proof of the concavity of spectral entropy. \textbf{[(a)$\to$(b)]}: separate $\rho$ into PDP states; this process needs work $H(\rho)NkT\ln 2$. \textbf{[(b)$\to$(c)]}: create $\rho_1$ and $\rho_2$ from pure states; the amount of work $\sum_i p_iH(\rho_i)NkT\ln 2$ can be extracted from this process. \textbf{[(c)$\to$(a)]}: since $\rho$ is the convex combination of $\rho_1$ and $\rho_2$, we can accomplish this process by just removing the wall with no work.}
	\label{fig:concavity}
\end{figure}
\end{proof}

\section{Groenewold--Ozawa information gain}
The uniqueness of the spectral entropy functional can be used to extend the definition of the Groenewold--Ozawa information gain~\cite{groenewold1971problem,ozawa1986information} to GPTs that satisfy weak spectrality as follow:

\begin{definition}
For any state $\rho\in\Omega$ and instrument $s$,
we define the Groenewold--Ozawa information gain as follows:
\begin{equation}
    I_{\mathrm{G}}(\rho,s):=H(\rho)-\sum_je_j(\rho)H\left(\frac{s_j(\rho)}{e_j(\rho)}\right)\;,
\end{equation}
where $e_j(\rho):=(u\circ s_j)(\rho)$ is the probability of the $j$-th outcome.
\end{definition}

The Groenewold--Ozawa information gain earns an operational interpretation in terms of the preorder introduced in Definition~\ref{def:rho-ref} as a consequence of the following fact, proved in Appendix \ref{sec:app-goig}.

\begin{theorem}\label{theorem:monotone}
   For any state $\rho\in\Omega$ and instruments $t$ and $s$,
   $t\succ_\rho s$ implies the following inequality:
    \begin{equation}
        I_{\mathrm{G}}(\rho,t)\ge I_{\mathrm{G}}(\rho,s)\;.\label{eq:Groenewold-monotone}
    \end{equation}
\end{theorem}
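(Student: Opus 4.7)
The plan is to reduce the inequality to an averaged version of the concavity of spectral entropy (Theorem~\ref{theorem:concavity}), by reading off the refinement relation as a statement about convex combinations of normalized post-measurement states. First I would introduce the outcome probabilities $e_j(\rho) := (u\circ s_j)(\rho)$ and $f_k(\rho) := (u\circ t_k)(\rho)$, together with the normalized post-measurement states $\sigma_j := s_j(\rho)/e_j(\rho)$ and $\tau_k := t_k(\rho)/f_k(\rho)$ (discarding outcomes of zero probability, as they contribute nothing to $I_{\mathrm G}$). The relation $t\succ_\rho s$ supplies a conditional distribution $p(j|k)$ with $s_j(\rho)=\sum_k p(j|k)\,t_k(\rho)$; applying the unique deterministic effect $u$ to both sides gives $e_j(\rho)=\sum_k p(j|k)\,f_k(\rho)$, so Bayes' rule yields a genuine probability distribution $q(k|j):=p(j|k)\,f_k(\rho)/e_j(\rho)$ for each $j$.

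Next I would divide the refinement identity by $e_j(\rho)$ to obtain
\begin{equation}
\sigma_j=\sum_{k\in K} q(k|j)\,\tau_k\;,
\end{equation}
which exhibits each $\sigma_j$ as an explicit convex combination of the $\tau_k$'s. Concavity of spectral entropy (Theorem~\ref{theorem:concavity}), iterated to arbitrary convex combinations, then gives $H(\sigma_j)\ge \sum_k q(k|j)\,H(\tau_k)$.

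Finally I would weight this by $e_j(\rho)$ and sum over $j$: the weights collapse via $\sum_j p(j|k)=1$, producing
\begin{equation}
\sum_{j} e_j(\rho)\,H(\sigma_j)\;\ge\;\sum_{k} f_k(\rho)\,H(\tau_k)\;.
\end{equation}
Subtracting both sides from $H(\rho)$ gives $I_{\mathrm G}(\rho,t)\ge I_{\mathrm G}(\rho,s)$, as claimed.

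The argument is essentially a clean concavity computation once the Bayesian bookkeeping is in place, so there is no deep obstacle. The only point that requires care is the applicability of Theorem~\ref{theorem:concavity} to the states $\sigma_j$ and $\tau_k$: the concavity theorem was established under the blanket assumption that all states are weakly spectral, so the statement should either inherit that assumption or, more generally, be restricted to instruments whose post-measurement states are weakly spectral (which in particular covers the MPP / fine-grained case relevant to the present paper). A secondary technicality is the handling of outcomes with $e_j(\rho)=0$ or $f_k(\rho)=0$, which is resolved by the usual convention $0\log 0=0$ and by simply omitting such outcomes from the sums, since they contribute nothing either to $I_{\mathrm G}$ or to the refinement relation.
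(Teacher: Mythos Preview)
Your proof is correct and follows essentially the same route as the paper's: both apply the deterministic effect to extract $e_j(\rho)=\sum_k p(j|k)f_k(\rho)$, rewrite each normalized post-measurement state $s_j(\rho)/e_j(\rho)$ as a convex combination of the $t_k(\rho)/f_k(\rho)$ with weights $p(j|k)f_k(\rho)/e_j(\rho)$, invoke concavity of the spectral entropy (Theorem~\ref{theorem:concavity}), and then average over $j$ to collapse the sums. Your explicit remarks about the blanket weak-spectrality assumption and the handling of zero-probability outcomes are in fact more careful than the paper's own presentation.
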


Recently, Ref.~\cite{danageozian2021thermodynamic} proved a relation between the Groenewold--Ozawa information gain and the heat absorbed by the system during a measurement process. This result together with our inequality \eqref{eq:Groenewold-monotone} suggests a link between the thermodynamic and the resource-theoretic characterization of quantum measurements. We leave this point open for future studies.

\section{Conclusions}
In this work, we have shown how von Neumann's thought experiment can be formulated in a purely operational language, without resorting to the structure of Hilbert spaces or the spectral theorem. An advantage of our reformulation is that we can now appreciate how important it is, in von Neumann's argument, to assume the validity of the second law of thermodynamics from the beginning. It is the second law, and not the spectral theorem, to force the entropy to be unique, lest we build a \textit{perpetuum mobile} of the second kind. In this sense, the second law is used by von Neumann as a consistency check, a first principle of logic rather than a law of physics~\cite{watanabe55,watanabe65,buscemi2021fluctuation,aw-buscemi-scarani}.

A problem left open is that of finding relations between the spectral entropy considered in this work and other entropies that can be considered in GPTs~\cite{short2010entropy,barnum2010entropy,kimura2010distinguishability,perinotti2021shannon}. Moreover, since we can regard von Neumann's device as a process for extracting work from measurements, there may be a close relationship between the present discussion and previous research on work extraction from quantum measurement processes~\cite{hayashi2017measurement}. We leave these questions for future works.

\section{Acknowledgments}
The authors are very grateful to Michele Dall'Arno, Masanao Ozawa, and Mark M. Wilde for their useful comments.
Support from MEXT Quantum Leap Flagship Program (MEXT QLEAP) Grant No. JPMXS0120319794 is acknowledged.
S.M. would like to take this opportunity to thank the “Nagoya University Interdisciplinary Frontier Fellowship” supported by JST and Nagoya University.
H.A. is supported by a JSPS Grant-in-Aids for JSPS Fellows No. JP22J14947, a JSPS Grant-in-Aids for Scientific Research (B) Grant No. JP20H04139, and a Grant-in-Aid for JST SPRING No. JPMJSP2125.
F.B. acknowledges support from MEXT Quantum Leap Flagship Program (MEXT QLEAP) Grant No. JPMXS0120319794; from MEXT-JSPS Grant-in-Aid for Transformative Research Areas (A) “Extreme Universe”, No. 21H05183; from JSPS KAKENHI Grants No. 19H04066 and 20K03746.

\medskip\textit{Author contributions.}---S.~M. and H.~A. contributed to this work equally.

\appendix
\section{Proof of Lemma \ref{lemma:f-g-instrument}}\label{app:lamma:f-g}
Firstly we prove that the output states of fine-grained instruments are pure. Consider a state space $\Omega$ and a state $\rho\in\Omega$. Let $s=\{s_j\}$ be a fine-grained instrument and $e_j(\rho)$ denotes $(u\circ s_j)(\rho)$ for simplicity.
Without loss of generality, suppose that a post-measurement state corresponding to the outcome $j=|J|$,  
\begin{equation}
\sigma_{|J|}(\rho):=\frac{1}{e_{|J|}(\rho)}s_{|J|}(\rho)    
\end{equation}
is not pure, that is, there is a convex decomposition like 
\begin{equation}
\sigma_{|J|}(\rho)=\sum_{l\in L} q_l \sigma_l,
\end{equation}
where $\{q_l\}_{l\in L}$ is a probability distribution and $|L|>1$. Note that both $\{q_l\}_{l\in L}$ and $\sigma_l$ depend on $\rho$. 

If we multiply both side of this equation by $e_{|j|}(\rho)$, we obtain 
\begin{equation}
    s_{|J|}(\rho)=\sum_{l\in L}e_{|J|}(\rho)q_l\sigma_l.
\end{equation}
Let us now introduce events $\tilde{s}_{l,|J|}$ such that
\begin{equation}
    \tilde{s}_{l,|J|}(\rho)=e_{|J|}(\rho)q_l\sigma_l.
\end{equation}

Let $K^J$ and $K^L$ be sets such that
\begin{align}
    K_1&:=\{1,\dots,j,\dots,|J|-1\}\\
    K_2&:=\{(1,|J|),\dots,(l,|J|),\dots,(|L|,|J|)\},
\end{align}
where $j\in J\setminus\{|J|\}$ and $l\in L$.

Now we define an index $k\in K$ as a direct sum of $K^J$ and $K^L$:
\begin{equation}
    K:=K_1\cup K_2
\end{equation}
There is a family of pure states $\{t_k(\rho)\}_{k\in K}$ such that
\begin{equation}
    t_k(\rho):=
    \begin{cases}
        &s_k(\rho)\quad \mathrm{if}\:k\in K_1\\
        &\tilde{s}_{k}(\rho)\quad \mathrm{if}\:k\in K_2
    \end{cases}
\end{equation}
Since we have $s_{|J|}(\rho)=\sum_{k\in K_2}\tilde{s}_k(\rho)$, we have $t\succ_\rho s$. However, since the $|j|$-th output state of $s$ is mixed, and it is not possible to make a mixed state pure by further convex mixtures, it is clear that $s\nsucc_\rho t$. This contradicts the assumption that $s$ is fine-grained, thus proving the first part, that is, the states $\frac{1}{e_j(\rho)}s_j(\rho)$ must be pure for all outcomes $j$ and all initial states $\rho$.

Now we show that fine-grained instruments are MPP instruments. Suppose that a state $\rho\in\Omega$ has a convex decomposition $\rho=p\rho_1+(1-p)\rho_2$, where $\rho_1,\rho_2$ are two different state on $\Omega$. From the affinity of $s_j$, $s_j(\rho)=ps_j(\rho_1)+(1-p)s_j(\rho_2)$ holds. For what we said before, $s_j(\rho)$ is proportional to a pure state. Therefore, it must be that both $s_j(\rho_1)$ and $s_j(\rho_2)$ are proportional to $s_j(\rho)$. This means that the post-measurement state does not depend on the initial state, if not from the outcome $j$, which implies that $s$ is MPP.

\section{A theory satisfying A2 but not strong symmetry}\label{sec:app2}
Here we give an explicit example of a theory that satisfies property A2 above, that is, the existence of free pure-state transformations, but does not satisfy strong symmetry.

Let $\cH$ be a finite-dimensional Hilbert space and let $\cS$ be a state space on $\cH$, that is, a convex set of positive semidefinite matrices on $\cH$ with unit trace.
Also, let $\cP\subseteq\cS$ be a set of rank one matrices, which corresponds to the set of pure states. 
Consider two systems $\cH_A$ and $\cH_B$ and let $\cP_{\rA}\otimes\cP_{\rB}$ be the set of product pure states. The state space $\Omega$ of the theory is the convex hull $\mathrm{SEP(A;B)}$ of $\cP_{\rA}\otimes\cP_{\rB}$.

For a state space $\Omega$, define a class of transformation $\cF(\Omega)$ as the set of all linear maps $F$ satisfying $F(\Omega)=\Omega$. Any reversible transformation clearly belongs to $\cF(\Omega)$.

\begin{definition}[$k$-symmetry]
	We say that a state space $\Omega$ is $k$-symmetric if there exists a map
	$F\in\cF(\Omega)$ such that $\rho_i = F(\sigma_i)$, for $i = 1, \cdots, k$, for any pair of $k$-tuples of perfectly distinguishable pure states $\{\rho_i\}_{i=1}^k$ and $\{\sigma_i\}_{i=1}^k$.
\end{definition}

Now we show the difference between strong symmetry and weak symmetry by giving the following counterexample.
\begin{theorem}\label{theorem:sep}
	$\mathrm{SEP(A;B)}$ is $1$-symmetric but not $2$-symmetric.
\end{theorem}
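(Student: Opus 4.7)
The plan is to treat the two halves of the claim separately. For $1$-symmetry, I would observe that for any two pure product states $|ab\rangle\langle ab|$ and $|a'b'\rangle\langle a'b'|$ in $\cP_\rA\otimes\cP_\rB$, one can choose unitaries $U_\rA$ on $\cH_\rA$ and $U_\rB$ on $\cH_\rB$ with $U_\rA|a\rangle=|a'\rangle$ and $U_\rB|b\rangle=|b'\rangle$. The conjugation $F(\cdot)=(U_\rA\otimes U_\rB)(\cdot)(U_\rA\otimes U_\rB)^\dagger$ is linear, bijects $\cP_\rA\otimes\cP_\rB$ onto itself, and therefore preserves its convex hull $\mathrm{SEP}(\rA;\rB)$; since by construction it sends $|ab\rangle\langle ab|$ to $|a'b'\rangle\langle a'b'|$, it lies in $\cF(\mathrm{SEP}(\rA;\rB))$ and witnesses $1$-symmetry.

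For the failure of $2$-symmetry I would exhibit an explicit counterexample in the smallest nontrivial case $\dim\cH_\rA=\dim\cH_\rB=2$. Consider the two pairs of perfectly distinguishable pure product states $(\sigma_1,\sigma_2)=(|00\rangle\langle 00|,|11\rangle\langle 11|)$ and $(\rho_1,\rho_2)=(|00\rangle\langle 00|,|01\rangle\langle 01|)$; in each case the underlying vectors are orthogonal in $\cH_\rA\otimes\cH_\rB$, so both pairs are PDP in the SEP theory. Suppose for contradiction some $F\in\cF(\mathrm{SEP}(\rA;\rB))$ satisfies $F(\sigma_i)=\rho_i$ for $i=1,2$. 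By linearity $F$ would send $\tau:=\tfrac12\sigma_1+\tfrac12\sigma_2$ to $\tau':=\tfrac12\rho_1+\tfrac12\rho_2=|0\rangle\langle 0|\otimes\tfrac12 I_\rB$. The strategy is then to display a GPT-level invariant separating $\tau$ from $\tau'$: the cardinality of the set of pure product states lying in the face of $\mathrm{SEP}(\rA;\rB)$ generated by the state, equivalently, the set of pure products whose support sits in the quantum range of the state. A short linear-algebra check shows that the only product vectors in $\mathrm{span}(|00\rangle,|11\rangle)$ are scalar multiples of $|00\rangle$ and $|11\rangle$, so the face of $\tau$ contains exactly two extremal product states; on the other hand every vector $|0\rangle\otimes|b\rangle$ with $|b\rangle$ a unit vector of $\cH_\rB$ is an extremal product state of the face of $\tau'$, yielding a continuum. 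Since any element of $\cF(\mathrm{SEP}(\rA;\rB))$ must carry faces to faces and extreme points to extreme points, this cardinality mismatch delivers the required contradiction.

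The main obstacle I anticipate is justifying that $F$ really does carry extreme points to extreme points and faces to faces, given that a priori it is only a linear map with $F(\mathrm{SEP})=\mathrm{SEP}$. For this I would use that $\mathrm{SEP}(\rA;\rB)$ has non-empty interior in the finite-dimensional affine hull of trace-one Hermitians on $\cH_\rA\otimes\cH_\rB$, so surjectivity of $F$ onto $\mathrm{SEP}(\rA;\rB)$ forces $F$ to be a bijection on this affine hull. An extreme point $\sigma$ of $\mathrm{SEP}(\rA;\rB)$ that decomposed as $F(\sigma)=\lambda a+(1-\lambda)b$ with distinct $a,b\in\mathrm{SEP}(\rA;\rB)$ would, via $F^{-1}$, give a nontrivial convex decomposition of $\sigma$, a contradiction; hence $F$ permutes the extreme points, which are precisely the pure product states. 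With this foundation in place, the invariant-counting argument above closes the proof.
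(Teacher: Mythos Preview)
Your argument is correct, but the route you take for the failure of $2$-symmetry is genuinely different from the paper's. The paper invokes an external structural result (their Lemma~\ref{lemma:sep_transf}, due to Friedland et al.) which says that every $F\in\cF(\mathrm{SEP}(\rA;\rB))$ acts as a local unitary (or local unitary composed with transpose, and possibly a swap when $\dim\cH_\rA=\dim\cH_\rB$). From this it is immediate that $\Tr\{\rho_1\rho_2\}$ is an invariant of $\cF(\mathrm{SEP})$, and the paper's counterexample pits the orthogonal pair $(\ket{00},\ket{11})$ against the \emph{non}-orthogonal but SEP-distinguishable pair $(\ket{00},\ket{++})$, whose existence is supplied by their Lemma~\ref{lemma:sep_perf}.

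Your approach bypasses both external lemmas. By choosing two \emph{orthogonal} PDP pairs, $(\ket{00},\ket{11})$ and $(\ket{00},\ket{01})$, you avoid needing the SEP-specific distinguishability criterion. And instead of the Friedland classification, you use only the soft fact that $F(\mathrm{SEP})=\mathrm{SEP}$ with $\mathrm{SEP}$ full-dimensional forces $F$ to be an affine automorphism, hence to preserve faces and extreme points; the invariant you read off is the number of pure product states in the face generated by the midpoint. This is more self-contained and arguably more elementary. The paper's proof, on the other hand, gives a sharper picture of what $\cF(\mathrm{SEP})$ actually looks like, which is informative beyond the immediate statement. One small remark: your parenthetical ``equivalently, the set of pure products whose support sits in the quantum range of the state'' is not a general equivalence for the SEP face (the $\supseteq$ direction can fail when the remainder is PSD but entangled), but your actual argument only uses the $\subseteq$ direction for $\tau$ and a direct verification for $\tau'$, so nothing is affected.
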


We invoke the following two results. The first result is about the necessary and sufficient condition for two states in SEP to be perfectly distinguished \cite[Theorem 2.4]{arai2019perfect}.
Notice that non-orthogonal states can be perfectly distinguished in SEP because the set of all measurements in SEP is larger than the set of bipartite POVMs.
\begin{lemma}\label{lemma:sep_perf}
	In $\mathrm{SEP(A; B)}$, two pure states $\rho_1 = \rho^{\rA}_1\otimes\rho^{\rB}_1$ and $\rho_2=\rho^{\rA}_2\otimes\rho^{\rB}_2$ are perfectly distinguishable if and only if they satisfy
	\begin{equation}\label{eq:sep-dist}
		\Tr\rho^{\rA}_1\rho^{\rA}_2+\Tr\rho^{\rB}_1\rho^{\rB}_2\le 1.
	\end{equation}
\end{lemma}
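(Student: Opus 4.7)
The plan is to identify the effect set of $\mathrm{SEP(A;B)}$ via Choi--Jamio\l{}kowski duality, translate perfect distinguishability into a statement about positive linear maps, and then close both implications. Since the state cone of $\mathrm{SEP(A;B)}$ is the separable cone and the unit effect is the identity operator, the effect set is exactly $\{E : E \text{ and } I - E \text{ are block-positive on } \cH_\rA \otimes \cH_\rB\}$, where a Hermitian $E$ is \emph{block-positive} if $\langle \psi\phi|E|\psi\phi\rangle \ge 0$ for all product vectors. Via the pairing $\Tr(E\,\rho_\rA \otimes \rho_\rB) = \Tr[\rho_\rB\, \Phi(\rho_\rA)]$, such an effect $E$ is equivalent to a positive linear map $\Phi\colon M(\cH_\rA) \to M(\cH_\rB)$ satisfying $\Phi(\rho) \le I_\rB$ for every density matrix $\rho$. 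Writing $|a_j\rangle, |b_j\rangle$ for the vectors underlying $\rho_j^\rA, \rho_j^\rB$ and setting $\alpha := |\langle a_1|a_2\rangle|^2$, $\beta := |\langle b_1|b_2\rangle|^2$, perfect distinguishability thus amounts to the existence of such a $\Phi$ satisfying $\Phi(|a_1\rangle\langle a_1|)|b_1\rangle = |b_1\rangle$ and $\Phi(|a_2\rangle\langle a_2|)|b_2\rangle = 0$.

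For the necessity direction ($\Rightarrow$), I would examine the positive linear functional $\omega(X) := \langle b_2|\Phi(X)|b_2\rangle$ on $M(\cH_\rA)$. Positivity of $\Phi$ together with the bound $\Phi(|x\rangle\langle x|) \le I_\rB$ for every unit vector $|x\rangle$ guarantees a representing operator $\rho$ with $0 \le \rho \le I_\rA$ and $\omega(X) = \Tr(\rho\, X)$. The condition $\omega(|a_2\rangle\langle a_2|) = 0$ combined with $\rho \ge 0$ forces $\rho|a_2\rangle = 0$, so that its restriction to the effective two-dimensional input subspace takes the form $\rho = \lambda\,|a_2^\perp\rangle\langle a_2^\perp|$ with $\lambda \le 1$. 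Meanwhile, $|b_1\rangle$ being a top eigenvector of $\Phi(|a_1\rangle\langle a_1|) \le I_\rB$ with eigenvalue $1$ gives $\Phi(|a_1\rangle\langle a_1|) \ge |b_1\rangle\langle b_1|$, hence $\omega(|a_1\rangle\langle a_1|) \ge \beta$. On the other hand, $\omega(|a_1\rangle\langle a_1|) = \lambda(1-\alpha)$, so $\beta \le \lambda(1-\alpha) \le 1-\alpha$, proving $\alpha + \beta \le 1$.

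For the sufficiency direction ($\Leftarrow$), I would construct $\Phi$ explicitly under the hypothesis $\alpha + \beta \le 1$. Fix orthonormal bases $\{|a_1\rangle, |a_1^\perp\rangle\}$ and $\{|b_1\rangle, |b_1^\perp\rangle\}$ so that $|a_2\rangle = \sqrt\alpha\,|a_1\rangle + \sqrt{1-\alpha}\,|a_1^\perp\rangle$ and $|b_2\rangle = \sqrt\beta\,|b_1\rangle + \sqrt{1-\beta}\,|b_1^\perp\rangle$ (after absorbing phases). Define $A\colon \cH_\rA \to \cH_\rB$ by $A|a_1\rangle := |b_1\rangle$, $A|a_1^\perp\rangle := -\sqrt{\alpha\beta/[(1-\alpha)(1-\beta)]}\,|b_1^\perp\rangle$, and $0$ on the orthogonal complement of $\mathrm{span}\{|a_1\rangle,|a_1^\perp\rangle\}$. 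A short computation shows $A|a_2\rangle \propto |b_2^\perp\rangle$, and $\|A\|^2 = \max\{1,\,\alpha\beta/[(1-\alpha)(1-\beta)]\}$, which is $\le 1$ precisely when $\alpha+\beta \le 1$. The completely positive map $\Phi(X) := A X A^\dagger$ then satisfies both distinguishing constraints and yields, via the duality above, a valid effect of $\mathrm{SEP(A;B)}$ that perfectly distinguishes $\rho_1$ from $\rho_2$.

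No step is expected to pose a serious technical obstacle: both directions reduce to elementary positivity manipulations, and the contraction $A$ has operator norm $\le 1$ exactly when $\alpha + \beta \le 1$. The conceptual subtlety worth flagging is that, even though $\Phi$ can be chosen as a single-Kraus completely positive map, the corresponding effect is in general \emph{not} positive semidefinite on $\cH_\rA \otimes \cH_\rB$; this is precisely what allows $\mathrm{SEP(A;B)}$ to distinguish non-orthogonal product pure states, a task impossible in standard quantum theory.
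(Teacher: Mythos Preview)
The paper does not supply its own proof of this lemma; it is quoted from Ref.~\cite{arai2019perfect} (their Theorem~2.4) and used as a black box. Your proposal therefore provides an argument where the paper offers none, so there is no ``paper's proof'' to compare against.

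On its own merits, your argument is correct. The necessity direction is clean: the positive functional $\omega(X)=\langle b_2|\Phi(X)|b_2\rangle$ is represented by an operator $0\le\rho\le I_{\rA}$; the constraint $\rho|a_2\rangle=0$ together with $\rho\le I_{\rA}$ yields $\langle a_1|\rho|a_1\rangle\le 1-\alpha$, while the eigenvector observation $\Phi(|a_1\rangle\langle a_1|)\ge|b_1\rangle\langle b_1|$ (valid because $|b_1\rangle$ is a top eigenvector of a positive operator bounded by $I_{\rB}$) yields $\langle a_1|\rho|a_1\rangle\ge\beta$. For sufficiency, the single-contraction construction is correct and the norm identity $\|A\|^2=\max\bigl\{1,\alpha\beta/[(1-\alpha)(1-\beta)]\bigr\}\le 1\Longleftrightarrow\alpha+\beta\le 1$ is precisely what makes it work. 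One cosmetic gap: your formula for $A|a_1^\perp\rangle$ is ill-defined when $\alpha=1$ or $\beta=1$; under the hypothesis these force $\beta=0$ or $\alpha=0$ respectively, i.e., orthogonality on one factor, and a local projective measurement then does the job. You should mention this case split explicitly.

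Your closing remark can be sharpened. Since $\Phi(X)=AXA^\dagger$ is completely positive, its \emph{Choi matrix} is positive semidefinite; however, under the pairing $\Tr[E(\rho_{\rA}\otimes\rho_{\rB})]=\Tr[\rho_{\rB}\,\Phi(\rho_{\rA})]$ the effect $E$ is the \emph{partial transpose} of that Choi matrix, and it is this partial transpose that destroys positive semidefiniteness when $\rho_1,\rho_2$ are non-orthogonal. Saying it this way makes transparent why the construction lives in $\mathrm{SEP(A;B)}$ but not in ordinary quantum theory.
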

The second result gives the form of the transformation maps in $\cF(\mathrm{SEP(A;B)})$ concretely \cite[Theorem 3]{friedland2011automorphism}.
\begin{lemma}\label{lemma:sep_transf}
	For the linear map $F$ from $\cT(\cH_{\rA}\otimes\cH_{\rB})\to\cT(\cH_{\rA}\otimes\cH_{\rB})$, the following are equivalent:
	\begin{enumerate}[(i)]
		\item $F\in\cF(\mathrm{SEP(A;B)})$.
		\item $F(\cP_{\rA}\otimes\cP_{\rB})=\cP_{\rA}\otimes\cP_{\rB}$.
		\item $F(A\otimes B)=F_{\rA}(A)\otimes F_{\rB}(B)$, or $\dim\cH_{\rA}=\dim\cH_{\rB}$ and $F(A\otimes B)=F_{\rB}(B)\otimes F_{\rA}(A)$, where $F_{\rA}(A)=U_{\rA} A U_{\rA}^\dagger$ or $U_{\rA} A^{\mathsf{T}}U_{\rA}^\dagger$ and $F_{\rB}(B)=V_{\rB}BV_{\rB}^\dagger$ or $V_{\rB}B^{\mathsf{T}}V_{\rB}^\dagger$.
	\end{enumerate}
\end{lemma}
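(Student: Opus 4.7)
The plan is to establish the three-way equivalence by handling the easy directions $\mathrm{(i)} \Leftrightarrow \mathrm{(ii)}$ and $\mathrm{(iii)} \Rightarrow \mathrm{(ii)}$ first, and then concentrating on the substantive implication $\mathrm{(ii)} \Rightarrow \mathrm{(iii)}$.

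For $\mathrm{(i)} \Leftrightarrow \mathrm{(ii)}$: by construction $\mathrm{SEP(A;B)} = \mathrm{conv}(\cP_{\rA} \otimes \cP_{\rB})$, and since each product pure state has rank one, none of them can be written as a nontrivial convex combination of other elements of $\cP_{\rA} \otimes \cP_{\rB}$; hence $\cP_{\rA} \otimes \cP_{\rB}$ is exactly the set of extreme points of the compact convex set $\mathrm{SEP(A;B)}$. A linear bijection of the ambient matrix space that preserves this compact convex set must permute its extreme points, and conversely a bijection of the extreme-point set extends by convex combination to a linear bijection of $\mathrm{SEP(A;B)}$. For $\mathrm{(iii)} \Rightarrow \mathrm{(ii)}$, direct verification suffices: both $X \mapsto U X U^\dagger$ and $X \mapsto U X^{\mathsf{T}} U^\dagger$ are bijections on rank-one positive matrices of unit trace, and taking tensor products (optionally composed with the swap when $\dim \cH_{\rA} = \dim \cH_{\rB}$) gives a bijection of $\cP_{\rA} \otimes \cP_{\rB}$ onto itself.

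The substantive content is $\mathrm{(ii)} \Rightarrow \mathrm{(iii)}$, which I would approach by a rigidity argument. Write $F(|\phi\rangle\langle\phi| \otimes |\psi\rangle\langle\psi|) = |\alpha(\phi,\psi)\rangle\langle\alpha(\phi,\psi)| \otimes |\beta(\phi,\psi)\rangle\langle\beta(\phi,\psi)|$. Fix $|\psi_0\rangle$ and consider how the image depends on $\phi$: linearity of $F$ applied to identities of the form $|\phi_1\rangle\langle\phi_1| + |\phi_2\rangle\langle\phi_2| = |\phi_3\rangle\langle\phi_3| + |\phi_4\rangle\langle\phi_4|$ for two orthonormal bases of the same two-dimensional subspace of $\cH_{\rA}$ forces the two-dimensional subspace of $\cH_{\rA} \otimes \cH_{\rB}$ spanned by the corresponding output product vectors to contain a continuous family of product vectors. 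A standard fact about product vectors then asserts that such a two-dimensional subspace is of the form $|\alpha\rangle \otimes V$ or $W \otimes |\beta\rangle$; iterating this conclusion as the two-dimensional subspace of $\cH_{\rA}$ varies forces either $|\beta(\phi,\psi_0)\rangle\langle\beta(\phi,\psi_0)|$ to be constant in $\phi$ (the non-swap branch) or $|\alpha(\phi,\psi_0)\rangle\langle\alpha(\phi,\psi_0)|$ to be constant (the swap branch, which additionally requires $\dim\cH_{\rA} = \dim\cH_{\rB}$). A symmetric analysis in $\psi$ then selects a globally consistent alternative and yields a factorization $F = F_{\rA} \otimes F_{\rB}$ (possibly preceded by the swap).

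Once $F$ is known to factor, the admissible shapes of $F_{\rA}$ and $F_{\rB}$ individually follow from the classical Wigner--Kadison characterization: any linear bijection of the Hermitian matrices on $\cH_X$ preserving the set of rank-one projectors is of the form $X \mapsto U X U^\dagger$ or $X \mapsto U X^{\mathsf{T}} U^\dagger$ for some unitary $U$. Combined with the factorization, this yields (iii). The hard part is the rigidity step forcing the tensor-factored structure of $F$; once this is in place, the Wigner--Kadison input finishes each tensor leg routinely.
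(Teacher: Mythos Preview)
The paper does not actually prove this lemma: it is quoted verbatim as Theorem~3 of Friedland et al.\ (the reference \texttt{friedland2011automorphism}), so there is no in-paper argument to compare your proposal against.

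That said, your outline is essentially the standard route to this result and is sound at the level of a sketch. The equivalences $\mathrm{(i)}\Leftrightarrow\mathrm{(ii)}$ and $\mathrm{(iii)}\Rightarrow\mathrm{(ii)}$ are handled correctly. For $\mathrm{(ii)}\Rightarrow\mathrm{(iii)}$ your rigidity idea---apply linearity to identities of the form $(|\phi_1\rangle\langle\phi_1|+|\phi_2\rangle\langle\phi_2|)\otimes|\psi_0\rangle\langle\psi_0|=(|\phi_3\rangle\langle\phi_3|+|\phi_4\rangle\langle\phi_4|)\otimes|\psi_0\rangle\langle\psi_0|$, deduce that the two-dimensional output subspace contains more than two product rays, and invoke the dichotomy that such a subspace must be of the form $|\alpha\rangle\otimes V$ or $W\otimes|\beta\rangle$---is exactly the mechanism used in the cited reference. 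The places where your sketch is thin are the ones that also carry the technical weight in the full proof: the ``iterating this conclusion'' step needs a connectedness argument on the Grassmannian of two-planes in $\cH_{\rA}$ to rule out a mixture of swap and non-swap behaviour across different subspaces, and the passage from ``$F$ factors on product pure states'' to ``$F=F_{\rA}\otimes F_{\rB}$ as a linear map on all of $\cT(\cH_{\rA}\otimes\cH_{\rB})$'' requires checking that the induced maps $F_{\rA},F_{\rB}$ are well defined and linear (not merely set maps on rank-one projectors) before Wigner--Kadison can be invoked. These are genuine but routine verifications; your plan correctly isolates them and would go through once filled in.
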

\begin{proof}[Proof of Theorem \ref{theorem:sep}]
	Since $\cF(\mathrm{SEP(A;B)})$ contains all local unitary maps, $\mathrm{SEP(A;B)}$ clearly satisfies $1$-symmetry.
	
	Now, we show that $\mathrm{SEP(A;B)}$ is not $2$-symmetric by giving a counterexample.
	Take the following four separable pure states:
	\begin{align}
		\rho_1&=\rho_1^{\mathrm{A}}\otimes\rho_1^{\mathrm{B}}=
		\begin{bmatrix}
			1 & 0\\
			0 & 0
		\end{bmatrix}\otimes
		\begin{bmatrix}
			1 & 0\\
			0 & 0
		\end{bmatrix},\label{eq:ex_sep1}\\
		\rho_2&=\frac{1}{2}
		\begin{bmatrix}
			1 & 1\\
			1 & 1
		\end{bmatrix}\otimes\frac{1}{2}
		\begin{bmatrix}
			1 & 1\\
			1 & 1
		\end{bmatrix},\label{eq:ex_sep2}\\
		\sigma_1&=
		\begin{bmatrix}
			1 & 0\\
			0 & 0
		\end{bmatrix}\otimes
		\begin{bmatrix}
			1 & 0\\
			0 & 0
		\end{bmatrix},\label{eq:ex_sep3}\\
		\sigma_2&={}
		\begin{bmatrix}
			0 & 0\\
			0 & 1
		\end{bmatrix}\otimes
		\begin{bmatrix}
			0 & 0\\
			0 & 1
		\end{bmatrix}\label{eq:ex_sep4}.
	\end{align}
	By direct inspection, we can verify that the two dichotomies $\{\rho_1,\rho_2\}$ and $\{\sigma_1,\sigma_2\}$ both satisfy condition~\eqref{eq:sep-dist} and thus, by Lemma \ref{lemma:sep_perf}, both contain perfectly distinguishable pure states in $\mathrm{SEP(A;B)}$.
	
	Assume that there is a map $F\in\cF(\mathrm{SEP}(A;B))$ where $\sigma_1=F(\rho_1)$ and $\sigma_2=F(\rho_2)$.
	From Lemma \ref{lemma:sep_transf}, the following equality should holds:
	\begin{equation}
		\begin{split}
			\Tr\{\sigma_1\sigma_2\}&=\Tr\{F_{\rA}(\rho_1^{\rA})F_{\rA}(\rho_2^{\rA})\otimes F_{\rB}(\rho_1^{\rB})F_{\rB}(\rho_2^{\rB})\}\\
			&=\Tr\{\rho_1^{\rA}\rho_2^{\rA}\otimes\rho_1^{\rB}\rho_2^{\rB}\}\\
			&=\Tr\{\rho_1\rho_2\}.\label{eq:sep_tr}
		\end{split}
	\end{equation}
	However, now we have
	\begin{equation}
		\Tr\{\rho_1\rho_2\}=\frac{1}{4},\quad \Tr\{\sigma_1\sigma_2\}=0.
	\end{equation}
	This contradicts \eqref{eq:sep_tr}.
	Thus $\mathrm{SEP}(A;B)$ is $1$-symmetric but not $2$-symmetric.
\end{proof}

\section{Non-uniqueness of spectral entropy}\label{sec:app-different-spectral-entropies}

Theorem 1 says that the entropies corresponding to different PDP decompositions must all be equal if the second law is valid. However, in general, there exist theories where the same state can have decompositions with different entropies. As a consequence of Theorem 1, all such theories are intrinsically incompatible with the second law.

We consider again the state space SEP.
Lemma~\ref{lemma:sep_perf} implies that the following two separable states are perfectly distinguishable:
\begin{align}
    \rho_1
    &=
    \begin{bmatrix}
        1 & 0\\
        0 & 0
    \end{bmatrix}
    \otimes
    \begin{bmatrix}
        1 & 0\\
        0 & 0
    \end{bmatrix},\\
    \rho_2
    &=
    \cfrac{1}{2}\begin{bmatrix}
        1 & 1\\
        1 & 1
    \end{bmatrix}
    \otimes
    \cfrac{1}{2}\begin{bmatrix}
        1 & 1\\
        1 & 1
    \end{bmatrix}.
\end{align}
Besides, the reference \cite{arai2019perfect} gives the following measurement $\{e_1,e_2\}$ that discriminate $\{\rho_1,\rho_2\}$ perfectly:
\begin{align}\label{eq:meas-1}
    e_1(\rho)
    &=\Tr \left\{
    \cfrac{1}{2}
    \begin{bmatrix}
    	2 & 0 & 0 & -1\\
    	0 & 0 & -1 & 0\\
    	0 & -1 & 0 & 0\\
    	-1 & 0 & 0 & 2
    \end{bmatrix}
    \rho
    \right\},\\
    e_2(\rho)\label{eq:meas-2}
    &=\Tr \left\{
    \cfrac{1}{2}
    \begin{bmatrix}
    	0 & 0 & 0 & 1\\
    	0 & 2 & 1 & 0\\
    	0 & 1 & 2 & 0\\
    	1 & 0 & 0 & 0
    \end{bmatrix}
    \rho
    \right\}.
\end{align}
The two matrices appearing above have negative eigenvalues: this is because SEP does not contain any entangled state.

Next,
we extend the state space of SEP slightly.
Consider the following density matrices with unit rank:
\begin{align}
	\sigma_1
	&=\cfrac{1}{6}
	\begin{bmatrix}
    	3 & \sqrt{3} & \sqrt{3} & \sqrt{3}\\
    	\sqrt{3} & 1 & 1 & 1\\
    	\sqrt{3} & 1 & 1 & 1\\
    	\sqrt{3} & 1 & 1 & 1
    \end{bmatrix},\\
    \sigma_2
    &=\cfrac{1}{6}
    \begin{bmatrix}
    	3 & -\sqrt{3} & -\sqrt{3} & -\sqrt{3}\\
    	-\sqrt{3} & 1 & 1 & 1\\
    	-\sqrt{3} & 1 & 1 & 1\\
    	-\sqrt{3} & 1 & 1 & 1
    \end{bmatrix}.
\end{align}
Because $\sigma_1$ and $\sigma_2$ are not separable,
$\sigma_1,\sigma_2\not\in \mathrm{SEP}(A;B)$.
Then consider the following state space $\overline{\Omega}$:
\begin{align}
	\overline{\Omega}:=
	\mathrm{Hul}\left(\mathrm{SEP}(A;B)\cup\{\sigma_1,\sigma_2\}\right),
\end{align}
where $\mathrm{Hul}(X)$ denotes the convex hull of $X$.
We remark that $\rho_i$ and $\sigma_j$ are pure because they are rank 1 matrices.

In the model corresponding to this state space,
the set of all measurements is given as the set of all affine functions $\{e_j\}_{j\in J}$ satisfying $e_j(\rho)\ge0$ and $\sum_je_j(\rho)=1$, for any $\rho\in\overline{\Omega}$ and $j\in J$.
Because the state space $\overline{\Omega}$ is smaller than the set of all density matrices, but larger than SEP(A;B),
the set of all measurements is larger than the set of POVMs and smaller than the set of measurements in SEP. In particular,
because $e_j(\sigma_i)\ge0$ for all $i,j$,
the measurement $\{e_1,e_2\}$ appearing in Eqs.~\eqref{eq:meas-1} and~\eqref{eq:meas-2} is allowed in the model.
\color{black}
Because the two states $\sigma_1,\sigma_2$ are orthogonal quantum states, they can be perfectly distinguished in conventional quantum theory and, therefore, are perfectly distinguishable also in this extended model.

This implies that the state $\rho:=\cfrac{1}{3}\rho_1+\cfrac{2}{3}\rho_2$ can be decomposed into PDP states in two different ways, as follows:
\begin{align}
	\rho&=\frac{1}{3}\rho_1+\frac{2}{3}\rho_2,\\
	&=\frac{3+\sqrt{3}}{6}\sigma_1+\frac{3-\sqrt{3}}{6}\sigma_2\;,
\end{align}
which clearly possess two different values for the spectral entropy.

\section{The proof of Theorem \ref{theorem:monotone}}\label{sec:app-goig}
Consider two instruments $s=\{s_j\}_{j\in J}$ and $t=\{t_k\}_{k\in K}$, where $e_j(\rho)=(u\circ s_j)(\rho)$ and $f_k(\rho)=(u\circ t_k)(\rho)$.
Suppose that $t\succ_\rho s$ holds.
Then we have
\begin{equation}
	\forall j\in J,\quad s_j(\rho)=\sum_{k\in K}p(j|k)t_k(\rho)
\end{equation}
where $\sum_{j\in J}p(j|k)=1$ holds for all $k\in K$. 

Firstly, we have
\begin{equation}
	\begin{split}
		e_j(\rho)&=(u\circ s_j)(\rho)=\left\{u\circ \left(\sum_{k\in K}p(j|k)t_k\right)\right\}(\rho)\\
		&=\sum_{k\in K}p(j|k)(u\circ t_k)(\rho)=\sum_{k\in K}p(j|k)f_k(\rho).\label{eq:e-f}
	\end{split}
\end{equation}
The third equality is because of the affinity of $u$.
Therefore, we have
\begin{equation}
	\begin{split}
		\frac{s_j(\rho)}{e_j(\rho)}&=\frac{\sum_{k\in K} p(j|k)t_k(\rho)}{\sum_{k\in K} p(j|k)f_k(\rho)}\\
		&=\sum_{k\in K}\frac{p(j|k)f_k(\rho)}{\sum_{k\in K} p(j|k)f_k(\rho)}\frac{t_k(\rho)}{f_k(\rho)}.
	\end{split}
\end{equation}
This implies that the state $s_j(\rho)/e_j(\rho)$ is the convex combination of states $t_k(\rho)/f_k(\rho)$ with the probability $p(j|k)f_k(\rho)/\sum_{k\in K}p(j|k)f_k(\rho)$.

If we use the result of Theorem \ref{theorem:monotone} and \eqref{eq:e-f}, we have
\begin{equation}
	\begin{split}
		&\sum_{j\in J} e_j(\rho)H\left(\frac{s_j(\rho)}{e_j(\rho)}\right)\\
		&\ge \sum_{j\in J} e_j(\rho)\sum_{k\in K} \frac{p(j|k)f_k(\rho)}{\sum_{k\in K}p(j|k)f_k(\rho)}H\left(\frac{t_k(\rho)}{f_k(\rho)}\right)\\
		&=\sum_{j\in J}e_j(\rho)\sum_{k\in K}\frac{p(j|k)f_k(\rho)}{e_j(\rho)}H\left(\frac{t_k(\rho)}{f_k(\rho)}\right)\\
		&=\sum_{j\in J}\sum_{k\in K}p(j|k)f_k(\rho)H\left(\frac{t_k(\rho)}{f_k(\rho)}\right)=\sum_{k\in K} f_k(\rho)H\left(\frac{t_k(\rho)}{f_k(\rho)}\right).
	\end{split}
\end{equation}

Therefore, we obtain
\begin{equation}
\begin{split}
    &H(\rho)-\sum_{j\in J} e_j(\rho)H\left(\frac{s_j(\rho)}{e_j(\rho)}\right)\\
    &\le H(\rho)-\sum_{k\in K}f_k(\rho)H\left(\frac{t_k(\rho)}{f_k(\rho)}\right),
\end{split}
\end{equation}
which is the desired inequality.

\bibliography{myref}

\end{document}